\newtheorem{theorem}{Theorem}
\newcommand{\cmark}{\ding{51}}%
\newcommand{\xmark}{\ding{55}}%
\newcommand\ours{PGSP}
  \providecommand\BibTeX{{%
    \normalfont B\kern-0.5em{\scshape i\kern-0.25em b}\kern-0.8em\TeX}}}
\begin{document}

\title{Personalized Graph Signal Processing for Collaborative Filtering}

\author{Jiahao Liu}\authornote{Also with Shanghai Key Laboratory of Data Science, Fudan University, Shanghai,
China.}
\affiliation{%
  \institution{School of Computer Science \\Fudan University}
  \city{Shanghai}
  \country{China}}
\email{jiahaoliu21@m.fudan.edu.cn}

\author{Dongsheng Li}
\affiliation{%
  \institution{Microsoft Research Asia}
  \city{Shanghai}
  \country{China}}
\email{dongshengli@fudan.edu.cn}

\author{Hansu Gu}\authornote{Corresponding author.}
\affiliation{%
  \city{Seattle}
  \country{United States}}
\email{hansug@acm.org}

\author{Tun Lu}\authornotemark[1]\authornotemark[2]
\affiliation{%
  \institution{School of Computer Science \\Fudan University}
  \city{Shanghai}
  \country{China}}
\email{lutun@fudan.edu.cn}

\author{Peng Zhang}\authornotemark[1]
\affiliation{%
  \institution{School of Computer Science \\Fudan University}
  \city{Shanghai}
  \country{China}}
\email{zhangpeng_@fudan.edu.cn}

\author{Li Shang}\authornotemark[1]
\affiliation{%
  \institution{School of Computer Science \\Fudan University}
  \city{Shanghai}
  \country{China}}
\email{lishang@fudan.edu.cn}

\author{Ning Gu}\authornotemark[1]
\affiliation{%
  \institution{School of Computer Science \\Fudan University}
  \city{Shanghai}
  \country{China}}
\email{ninggu@fudan.edu.cn}

\begin{abstract}
The collaborative filtering (CF) problem with only user-item interaction information can be solved by graph signal processing (GSP), which uses low-pass filters to smooth the observed interaction signals on the similarity graph to obtain the prediction signals.
However, the interaction signal may not be sufficient to accurately characterize user interests and the low-pass filters may ignore the useful information contained in the high-frequency component of the observed signals, resulting in suboptimal accuracy.
To this end, we propose a personalized graph signal processing (PGSP) method for collaborative filtering.
Firstly, we design the personalized graph signal containing richer user information and construct an augmented similarity graph containing more graph topology information, to more effectively characterize user interests.
Secondly, we devise a mixed-frequency graph filter to introduce useful information in the high-frequency components of the observed signals by combining an ideal low-pass filter that smooths signals globally and a linear low-pass filter that smooths signals locally.
Finally, we combine the personalized graph signal, the augmented similarity graph and the mixed-frequency graph filter by proposing a pipeline consisting of three key steps: pre-processing, graph convolution and post-processing.
Extensive experiments show that PGSP can achieve superior accuracy compared with state-of-the-art CF methods and, as a nonparametric method, PGSP has very high training efficiency.
\end{abstract}

\begin{CCSXML}
<ccs2012>
<concept>
<concept_id>10002951.10003260.10003261.10003271</concept_id>
<concept_desc>Information systems~Personalization</concept_desc>
<concept_significance>500</concept_significance>
</concept>
</ccs2012>
\end{CCSXML}

\ccsdesc[500]{Information systems~Personalization}

\keywords{recommendation, collaborative filtering, graph signal processing}

\maketitle

\section{Introduction}
Collaborative filtering (CF) is one of the most popular techniques in recommender systems, which predicts the interaction possibility between users and items through interaction history~\cite{Herlocker,Sarwar,sarwar2000application,Koren09,he2017neural,LightGCN}.
Recently, graph signal processing (GSP) has been applied in CF tasks~\cite{Narang13a, Narang13b, LSB, VM, Diffusion, GFCF}, which can transform the interaction signals from spatial domain to spectral domain through Graph Fourier Transform (GFT) to capture the structure information of the whole similarity graph.
Generally, GSP-based CF methods first construct a similarity graph between items, then use a low-pass filter defined on the similarity graph to smooth the users' interaction signals to obtain the prediction signals.
Several recent studies~\cite{Huang18, Revisit, GFCF} show that these methods are related to both neighborhood-based CF methods~\cite{Herlocker,Sarwar} and GCN-based CF methods~\cite{LightGCN}.

Existing GSP-based CF methods rely on the following two assumptions:
1) a user's personalized preference is completely characterized by his/her interaction history and  
2) the low-frequency information in the interaction signal is enough to predict user preferences.
However, due to data sparsity issues in many recommender systems, sparse historical interactions may not be enough to accurately describe user preferences.
Besides, only using low-frequency information containing common user interests may not be personalized enough due to neglecting the useful high-frequency information containing unique user interests.

In this paper, we propose the {\em personalized graph signal}, a signal with richer user information, to describe users more accurately.
Instead of only using the interaction information of users, the personalized graph signals use both the similarity information between users-items and the similarity information between users-users to describe users.
Moreover, we construct an \textit{augmented similarity graph} to utilize the signals more effectively.
The augmented similarity graph contains not only the similarity information between items-items, but also the similarity information between users-items and the similarity information between users-users.
With the personalized graph signal and the augmented similarity graph, users who are similar to the target user can amplify the signals of the items they have interacted with, which often contain the potential interests of the target user.
We construct the similarity relationship from the perspective of random walk, which shares some of the motivations in Node2Vec~\cite{Node2Vec}.

In the real world, the observed interaction signal of each user is a mixture of the real preference signal dominated by low-frequency information and the noise signal dominated by high-frequency information.
Specifically, the real preferences of each user are composed of globally smooth signals that reflect the general user preferences and locally smooth but globally rough signal that reflects the personalized user preferences.
In order to obtain these two kinds of signals, we propose a \textit{mixed-frequency graph filter}, which is a combination of an ideal low-pass filter that smooths signals globally and a linear low-pass filter that smooths signals locally.
By controlling the proportion of the two signals, we can obtain more accurate recommendation results than using only an ideal low-pass graph filter.

Finally, we propose the \textit{personalized graph signal processing} (PGSP) method by designing a pipeline to combine the personalized graph signal, the augmented similarity graph and the mixed-frequency graph filter to achieve higher accuracy. 
PGSP consists of three key steps:
{\em 1) pre-processing}, in which we construct the personalized graph signal by concatenating user similarity information with user-item interaction signal;
{\em 2) graph convolution}, in which we use the proposed mixed frequency graph filter to obtain both globally smooth signals and locally smooth but globally rough signals and
{\em 3) post-processing}, in which we recover the predicted interaction signal.
Extensive experiments show that PGSP has superior performance in prediction accuracy.
Meanwhile, we analyze the role of the mixed-frequency graph filter and show that, as a nonparametric method, PGSP has very high training efficiency.

The main contributions of this work are summarized as follows:
\begin{itemize}
\item We propose the personalized graph signal and the augmented similarity graph for GSP-based CF, which can characterize user interests more effectively and realize more personalized recommendations.
\item We reveal the effectiveness of high-frequency components in the observed signal, and find that controlling the ratio of globally smooth signal to locally smooth signal by the proposed mixed-frequency graph filter can further improve the accuracy.
\item We propose the personalized graph signal processing (PGSP) method by combining the personalized graph signal, the augmented similarity graph and mixed-frequency graph filters. Experiments 
show that PGSP can achieve higher accuracy compared with state-of-the-art CF methods.
\end{itemize}

\section{Related Work}
GSP aims to develop tools for processing data defined on irregular graph domains, including sampling, filtering and graph learning~\cite{Chung97,GSP}.
Next, we briefly summarize GSP-based CF works.

Most GSP-based methods treat the CF problem as graph signal reconstruction~\cite{chen2021scalable}.
~\citet{Narang13a} obtain the interpolated signal by projecting the input signal into the appropriate bandlimited graph signal space. Then, ~\citet{Narang13b} extend the previous work to an iterative method, which has higher computational efficiency and takes into account the reconstruction error relative to the known signals.
~\citet{LSB} reconstruct the graph signal by reweighting the sampled residuals for different vertices or propagating the sampled residuals in their respective local sets.
~\citet{VM} formulate graph signal recovery as an optimization problem and provide a general solution through the alternating direction methods of multipliers.
~\citet{Diffusion} interpret the smoothing of the signal by low-pass filtering operation as the diffusion of temperature and prove that the diffused signals are stable to perturbations in the underlying network.

Some CF methods introduced GSP into neural networks.
SpectralCF~\cite{SpectralCF} constructs a deep model for learning in the spectral domain, which uses rich connection information to alleviate the cold start problem.
~\citet{FP} extend SpectralCF by directly mining the association rules between the target user and the target item through FP-Growth~\cite{Han2000}, and then solve the problems of cold start and data sparsity.
AGE~\cite{AGE} proposes an attributed graph embedding framework, which consists of a carefully-designed Laplacian smoothing filter and an adaptive encoder.
~\citet{Yu20} propose a low-pass graph filter to remove the noise and reduce the complexity of graph convolution in an unscathed way.
AGCN~\cite{AGCN} further reduces complexity with Chebyshev polynomial graph filters.

Recently, some works have shown that there are close relationships between GSP-based CF methods and other CF methods~\cite{liu2022parameter,xia2022fire}.
~\citet{Huang18} demonstrate that neighborhood-based methods can be modeled as a specific band-stop graph filter and low-rank matrix completion can be viewed as band-limited interpolation algorithms.
~\citet{Revisit} find that graph neural networks only perform low-pass filtering on feature vectors and do not have the non-linear manifold learning property, which means that the graph structure only provides a means to denoise the data. 
~\citet{Simplify} reduce the complexity of GCN through successively removing nonlinearities and collapsing weight matrices between consecutive layers, and show that the resulting linear model corresponds to a fixed low-pass filter.
GF-CF~\cite{GFCF} develops a unified graph convolution-based framework for CF and proves that many existing CF methods are special cases of the framework which correspond to different low-pass filters in GSP.

\section{Preliminaries}
This section introduces the background knowledge of GSP and the motivation for using GSP in CF tasks.

\subsection{Graph Signal Processing}
Given a graph $\mathcal{G}=\{\mathcal{V},\mathcal{E}\}$, where $\mathcal{V}=\{v_1, v_2,...,v_n\}$ is the vertex set with $n$ nodes, $\mathcal{E}$ is the edge set. The topology structure of graph $\mathcal{G}$ can be denoted by a weighted adjacency matrix $W=\{w_{ij}\}\in\mathbb{R}^{n\times n}$, where $w_{ij}>0$ if $(v_i,v_j)\in \mathcal{E}$, indicating the similarity between $v_i$ and $v_j$, otherwise $w_{ij}=0$. $D=diag(d_1,d_2,...,d_n)\in\mathbb{R}^{n\times n}$ denotes the degree matrix of $W$, where $d_i=\sum_{v_j\in\mathcal{V}}w_{ij}$ is the degree of node $v_i$. The graph Laplacian matrix is defined as $L=D-W$.

The signal on the node $v_i$ is defined as a mapping $x_i:\mathcal{V}\rightarrow\mathbb{R}$. We can take $\pmb{x}\in\mathbb{R}^n$ as a graph signal where each node is assigned with a scalar. 
The energy of the graph signal is defined as $E(\pmb{x})=||\pmb{x}||^2$. The smoothness of the graph signal can be measured by the total variation as follows:
\begin{displaymath}
\textstyle
  TV(\pmb{x})=\pmb{x}^TL\pmb{x}=\sum_{(v_i,v_j)\in \mathcal{E}}w_{ij}(x_i-x_j)^2.
\end{displaymath}
The normalized total variation of $\pmb{x}$ can be calculated with the Rayleigh quotient as follows:
\begin{displaymath}
  Ray(\pmb{x})=\frac{TV(\pmb{x})}{E(\pmb{x})}=\frac{\pmb{x}^TL\pmb{x}}{\pmb{x}^T\pmb{x}}=\frac{\sum_{(v_i,v_j)\in \mathcal{E}}w_{ij}(x_i-x_j)^2}{\sum_{v_i\in \mathcal{V}}x_i^2}.
\end{displaymath}
As $L$ is real and symmetric, its eigendecomposition is given by $L=U\Lambda U^T$ where $\Lambda=diag(\lambda_1,\lambda_2,...,\lambda_n),\lambda_1\le \lambda_2\le...\le \lambda_n$, and  $U=(\pmb{u}_1,\pmb{u}_2,...,\pmb{u}_n)$ with $\pmb{u}_i\in\mathbb{R}^{n}$ being the eigenvector for eigenvalue $\lambda_i$. We call $\tilde{\pmb{x}}=U^T\pmb{x}$ as the graph Fourier transform of the graph signal $\pmb{x}$ and its inverse transform is given by $\pmb{x}=U\tilde{\pmb{x}}$.

GFT transfers the graph signal from the spatial domain to the spectral domain.
Rayleigh quotient can be transformed into the spectral domain as follows:
\begin{equation}
  \label{eq.Ray}
  Ray(\pmb{x})=\frac{\pmb{x}^TL\pmb{x}}{\pmb{x}^T\pmb{x}}=\frac{\pmb{x}^TU\Lambda U^T\pmb{x}}{\pmb{x}^TUU^T\pmb{x}}=\frac{\tilde{\pmb{x}}^T\Lambda \tilde{\pmb{x}}}{\tilde{\pmb{x}}^T\tilde{\pmb{x}}}=\frac{\sum_{v_i\in \mathcal{V}}\lambda_i\tilde{x}^2_i}{\sum_{v_i\in \mathcal{V}}\tilde{x}^2_i}.
\end{equation}
The graph filter $\mathcal{H}$ is defined as:
\begin{equation}
  \label{eq.filter}
  \mathcal{H}=Udiag(h(\lambda_1),h(\lambda_2)...,h(\lambda_n))U^T,
\end{equation}
where $h(\cdot)$ is the frequency response function. The graph convolution of an input signal $\pmb{x}$ and the filter $\mathcal{H}$ is defined as follows:
\begin{displaymath}
  \pmb{y}=\mathcal{H}\pmb{x}=Udiag(h(\lambda_1),h(\lambda_2)...,h(\lambda_n))U^T\pmb{x}.
\end{displaymath}
In summary, the original graph signal $\pmb{x}$ is first transformed into Fourier space by Fourier basis $U^T$, then the spectral domain signal is processed by filter $h(\cdot)$, and finally inversely transformed into spatial domain by $U$.

\subsection{Motivation of Using GSP in CF}
\label{sec:motivation}
CF tasks aim to predict the real preference matrix according to the observed interaction matrix.
If we have constructed the correct similarity matrix between items and the input signal is a user's real preference signal which indicates the user's preference for each item, the total variation of the graph signal can be discussed in two cases:
1) if two items $i$ and $j$ $(i\not=j)$ are similar, i.e., $w_{ij}$ is large, the user's preferences for these two items will be similar, which means that $(x_i-x_j)^2$ should be small. This situation will not lead to excessive $TV(\pmb{x})$;
2) if two items $i$ and $j$ $(i\not=j)$ are dissimilar, i.e., $w_{ij}$ is small, the user often has different preferences for the two items, which means that $(x_i-x_j)^2$ should be large.
This also does not cause $TV(\pmb{x})$ to be too large because their similarity $w_{ij}$ is small.
In conclusion, if the real preference signal is used as input, the total variation should be small.

However, due to the exposure noise and quantization noise in the observed interaction matrix \cite{Yu20}, the total variation becomes larger when the input signal is the observed user interaction signal.
Fig. \ref{Fig.noise.1} shows the real preference matrix, each row of which is a user's preference signal.
Exposure noise means that users can not encounter all items, but only a subset of them.
These items that are not exposed to users may or may not be liked by the users.
Fig. \ref{Fig.noise.2} shows the observed change in the user preference matrix due to exposure noise.
Users will choose whether to interact with the items exposed to them based on their preferences.
Finally, our observed user-item ratings are binary variables, which are no longer the real preference values, due to quantization noise.
Fig. \ref{Fig.noise.3} shows the change of user preference matrix observed by us due to quantization noise, that is, the final observed interaction matrix.

\begin{figure}
\centering
\subfigure[Real preference matrix]{
    \label{Fig.noise.1}
    \includegraphics[width=0.3\linewidth]{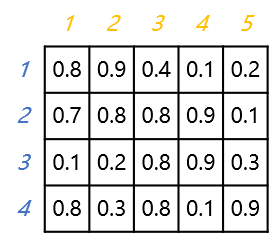}
}
\subfigure[With exposure noise]{
    \label{Fig.noise.2}
    \includegraphics[width=0.3\linewidth]{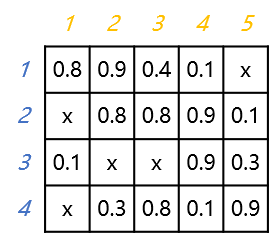}
}
\subfigure[With quantization noise]{
    \label{Fig.noise.3}
    \includegraphics[width=0.3\linewidth]{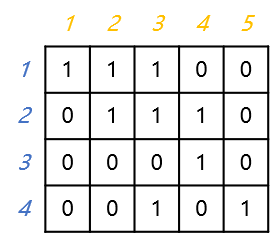}
}
\caption{Relationship between the real preference matrix and observed preference matrix by considering noises.}
\label{fig:three}
\end{figure}

From the perspective of the spectral domain, the real preference signal is a low-frequency signal, and the two kinds of noise are high-frequency signals.
The observed interaction signal is the fusion of low-frequency signal and high-frequency signal.
We can effectively solve the CF problem by graph convolution, the key of which is to design a graph filter determined by the frequency response function.
For Eq. (\ref{eq.Ray}), take $\pmb{x}=\pmb{u}_i$, we can get $Ray(\pmb{u}_i)=\lambda_i$, indicating that the eigenvector corresponding to the small eigenvalue is smoother, and this smoothing is smooth in the sense of the whole graph, e.g., globally smooth.
When restoring the spectral domain signal to the spatial domain, if only the smoother eigenvectors are used as the basis vectors, the reconstructed spatial signal will have a lower frequency in the sense of the whole graph.

Fig. \ref{Fig:pre}(i) illustrates the item similarity graph used in previous works.
we use $1$ to represent strong node signal, and $0$ to represent weak node signal.
The input graph signal is $(1,0,0)$, which makes item $1$ have a strong node signal, as shown in Fig. \ref{Fig:pre}(ii).
After graph convolution, the graph signal becomes smooth, that is, similar items have similar graph signals.
This improves the signal of item $2$ and then recommends it to the user $a$, as shown in Fig. \ref{Fig:pre}(iii).

However, we believe that the real user preference and these two kinds of noise can not be simply separated from the spectral domain by frequency.
Specifically, globally smooth signals are dominated by general preferences which may not be enough to capture personalized preferences.
Therefore, it is necessary to introduce locally smooth but globally rough signals to capture personalized preferences, which can help to improve recommendation quality.

\begin{figure}[tb!]
  \centering
	\includegraphics[width=0.98\linewidth]{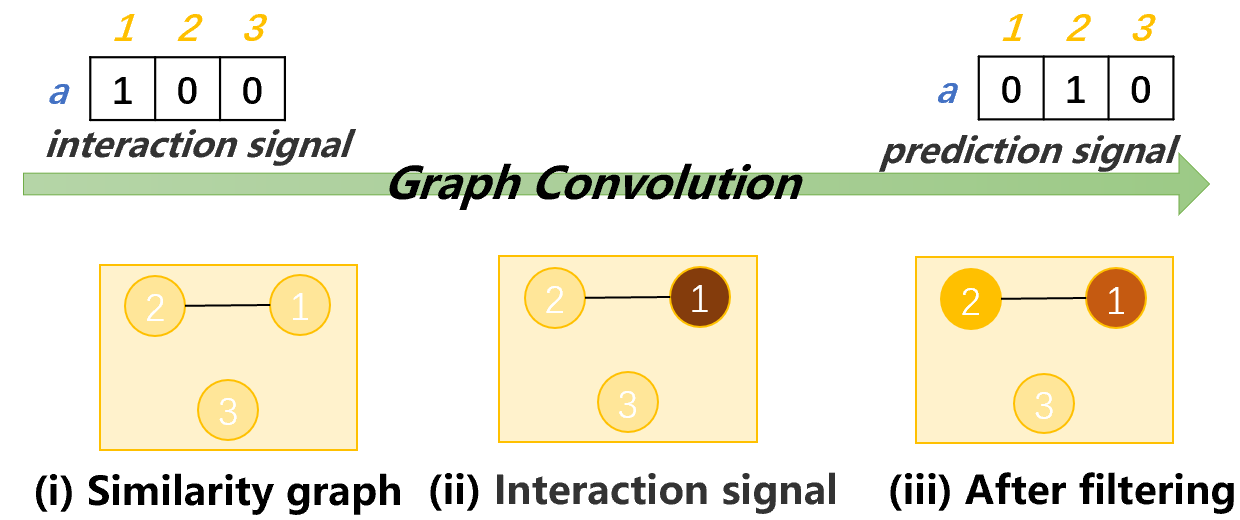}
    \caption{The previous GSP-based CF method. The interaction signal is directly input into the similarity graph.}
    \vspace{-1em}
    \label{Fig:pre}
\end{figure}
\section{The PGSP Method}
In this section, we propose the personalized graph signal processing (PGSP) method which consists of three key components:
\begin{itemize}
  \item {\em Personalized Graph Signal}, which has richer user-personalized information to describe users more accurately;
  \item {\em Augmented Similarity Graph}, which has more graph topology information to utilize the personalized graph signals more effectively;
  \item {\em Mixed-Frequency Graph Filter}, which allows globally smooth signals that reflect the general user preferences and locally smooth but globally rough signals that reflect the personalized user preferences to pass through.
\end{itemize}
We combine the above three components and propose the {\em PGSP Pipeline}, which consists of pre-processing, graph convolution and post-processing to achieve higher accuracy.
In the following discussion, we  suppose that there are $m$ users and $n$ items, and their interaction matrix is $R\in\{0,1\}^{m\times n}$, where $R_{ij}=1$ indicates that user $i$ has interacted with item $j$.

\subsection{Personalized Graph Signal and Augmented Similarity Graph}
In this section, we first introduce how to construct the similarity matrix when only relying on the graph structure from the perspective of random walk, and then propose personalized graph signal and augmented similarity graph.
\subsubsection{Construction of Similarity Graph}
When only considering the graph topology, we conclude from the design of \textit{Node2Vec}~\cite{Node2Vec} that the similarity of two nodes is determined by three factors:
1) the number of paths between two nodes,
2) the length of the paths and
3) the degree of the nodes that compose each of the paths.
In {random walk}, there are usually a large number of short paths between two nodes with high similarity, and the degree of the nodes composing each of the paths should not be too large (otherwise the probability of walking through each path will be low).

For the CF problem with only user-item interaction information, all we have is the structure of the interaction graph, so these three factors should be taken into account as prior knowledge when constructing the similarity graph.
The user-item interaction graph is a bipartite graph, which means that users are directly connected to items, and paths between two users/items must pass through intermediate items/users.
Therefore, for user-user similarity, item-item similarity, and user-item similarity, we need to consider the degrees of user nodes and item nodes at the same time.

The interaction matrix indicates the existence of a path with a length of 1 between a user node and an item node.
The original interaction matrix $R$ can not be used to describe the similarity between users and items, because it does not consider the node degree.
We take into account the degree of nodes and obtain the normalized interaction matrix, which can be used to represent the similarity between users and items as follows:
\begin{equation}
  S_{UI}=(D_U)^{-1/2}R(D_I)^{-1/2}.
\end{equation}
The $D_U$ and $D_I$ are diagonal matrix, where $(D_U)_{ii}=\sum_{j=1}^nR_{ij},i=1,...,m$ and $(D_I)_{jj}=\sum_{i=1}^mR_{ij},j=1,...,n$.
From the perspective of random walk, the weight of an edge completely depends on the degree of the nodes connected by the edge.
The more items that a user has interacted with, the less a single item can be capable of describing the user, and the lower the similarity between the user and the items.
The same is true from the perspective of an item.

Based on the normalized interaction matrix, we can calculate the user-user similarity matrix and item-item similarity matrix, respectively, as follows:
\begin{equation}
  \label{eq.SU}
  S_U = S_{UI}S_{UI}^T,\quad S_I=S_{UI}^TS_{UI}.
\end{equation}
From the perspective of random walk, the construction of $S_U$ simulates the two-step walk process of transferring from the user nodes to the item nodes (represented by $S_{UI}$) and then to the user nodes (represented by $S_{UI}^T$).
More paths with a length of 2 between two user nodes and higher weights of these paths will indicate higher similarity between the two user nodes.
There is a similar conclusion for the item-item similarity matrix $S_I$.

We emphasize that co-occurrence relationship and cosine similarity are not suitable for calculating similarity in this scenario.
In detail, their calculations are as follows:
\begin{displaymath}
\begin{aligned}
  S^{co}_U=RR^T,\quad
  S^{cos}_U=D_U^{-1/2}RR^TD_U^{-1/2}.\\ S^{co}_I=R^TR,\quad
  S^{cos}_I=D_I^{-1/2}R^TRD_I^{-1/2}.\\
\end{aligned}
\end{displaymath}
Obviously, these two methods do not fully consider the influence of node degree on random walk.

\subsubsection{Construction of Personalized Graph Signal}
Due to data sparsity issues in many recommender systems, sparse historical interactions may not be enough to describe user preferences accurately.
For each user, in addition to the direct interactions with items, there is also a collaborative relationship with other users, which is an indirect relationship.
We combine these two kinds of information to describe users more accurately.

For each user, we concatenate his/her similarities with other users with his/her interaction signal to obtain the personalized graph signal as follows:
\begin{equation}
    \label{r-tilde}
    \tilde{R}=S_U||R,
\end{equation}
where $||$ is the concatenation operation.

\subsubsection{Construction of Augmented Similarity Graph}
The interaction matrix is the direct embodiment of users' behaviors and describes the relationship between users and items.
The similarity matrix is the compression of interaction information to describe the relationship between item-item and user-user.
There are two motivations for the augmented similarity graph.
Firstly, all these relationships are useful for GSP-based CF methods.
Using the user-item relationship, the signals of user nodes can directly affect the signals of item nodes, and the existence of the item-item/user-user relationship makes it possible for the signals between item/user nodes to interact with each other.
Secondly, the personalized graph signal is an $m+n$-dimensional vector, which requires a similarity graph that can accommodate more node signals.

We construct the augmented similarity graph, which contains all the above three relationships as follows:
\begin{equation}
\label{eq.ASG}
  A=\left[ \begin{matrix} S_U & S_{UI}\\ S_{UI}^T & S_I\end{matrix} \right].
\end{equation}

Overall, our method is beneficial to CF in the following aspects:
1) a more reasonable similarity matrix constructed from the perspective of random walk is used to describe the relationship of users-items, users-users and items-items,
2) a personalized graph signal with richer information is used to describe users more accurately, and
3) an augmented similarity matrix with more topological information is used to process personalized graph signals more effectively.

\subsection{Mixed-Frequency Graph Filter}
In this section, we propose a mixed-frequency graph filter, which can capture globally smooth signals and locally smooth but globally rough signals simultaneously. 

With the definition of the augmented similarity graph, we can define the corresponding Laplacian matrix $L=I-A$, where $I$ is an identity matrix.
The frequency response function of the ideal low-pass filter is $h(\lambda)= 1$ if $\lambda \le \lambda_k$ and $h(\lambda)=0$ otherwise, 
where $\lambda_k$ is the cut-off frequency of the low-pass filter. Putting it into  Eq. (\ref{eq.filter}), we can obtain the ideal low-pass filter as follows:
\begin{displaymath}
 \mathcal{H}_{ideal} =Udiag(\overbrace{1,...,1}^k,\overbrace{0,...,0}^{m+n-k})U^T=U_kU_k^T,
\end{displaymath}
where $U$ is a matrix composed of the eigenvectors of $L$, $U_k$ is the first $k$ columns of $U$ corresponding to the smaller eigenvalues.
This indicates that the construction of the ideal low-pass filter does not need complete eigendecomposition of the matrix, but only needs to calculate the eigenvectors corresponding to the smaller eigenvalues.

\begin{figure}[tb!]
  \centering
	\includegraphics[width=0.95\linewidth]{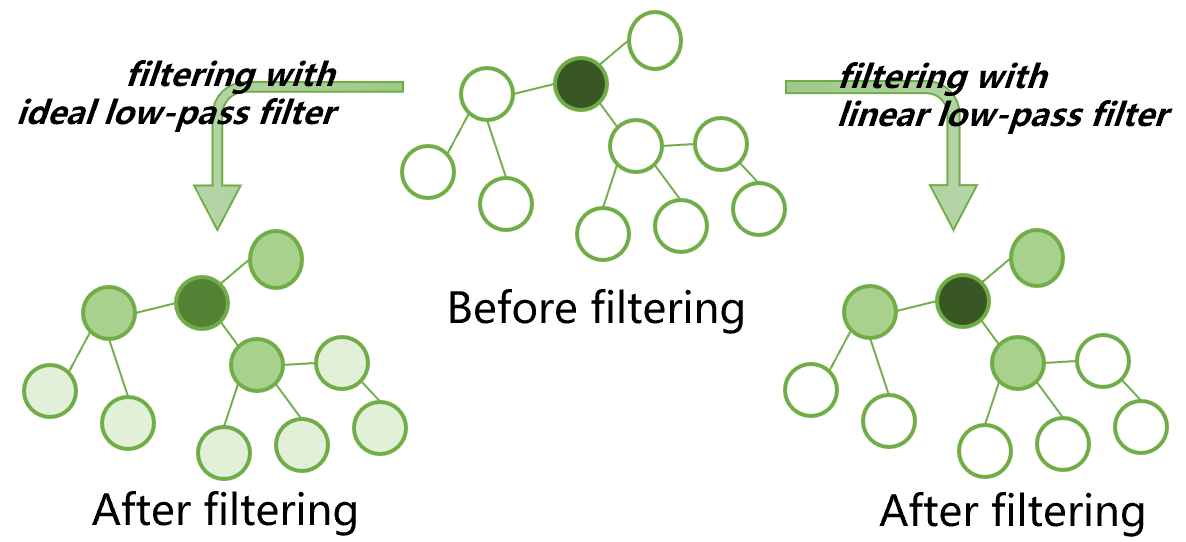}
    \caption{The ideal low-pass filter makes the signal globally smooth, and the linear low-pass filter makes the signal locally smooth but globally rough.}
    \vspace{-1em}
    \label{Fig:smooth}
\end{figure}

An ideal low-pass filter only uses the low-frequency Fourier basis to reconstruct the spatial signal.
Since these Fourier bases are low-frequency signals in the sense of the whole graph, it means that the signal filtered by an ideal low-pass filter is also a smooth signal in the sense of the whole graph.
The left arrow in Fig. \ref{Fig:smooth} describes this global smoothing.
After filtering, the node signal spreads to the whole graph.

\begin{figure*}[!th]
		\centering
		\includegraphics[width=0.78\linewidth]{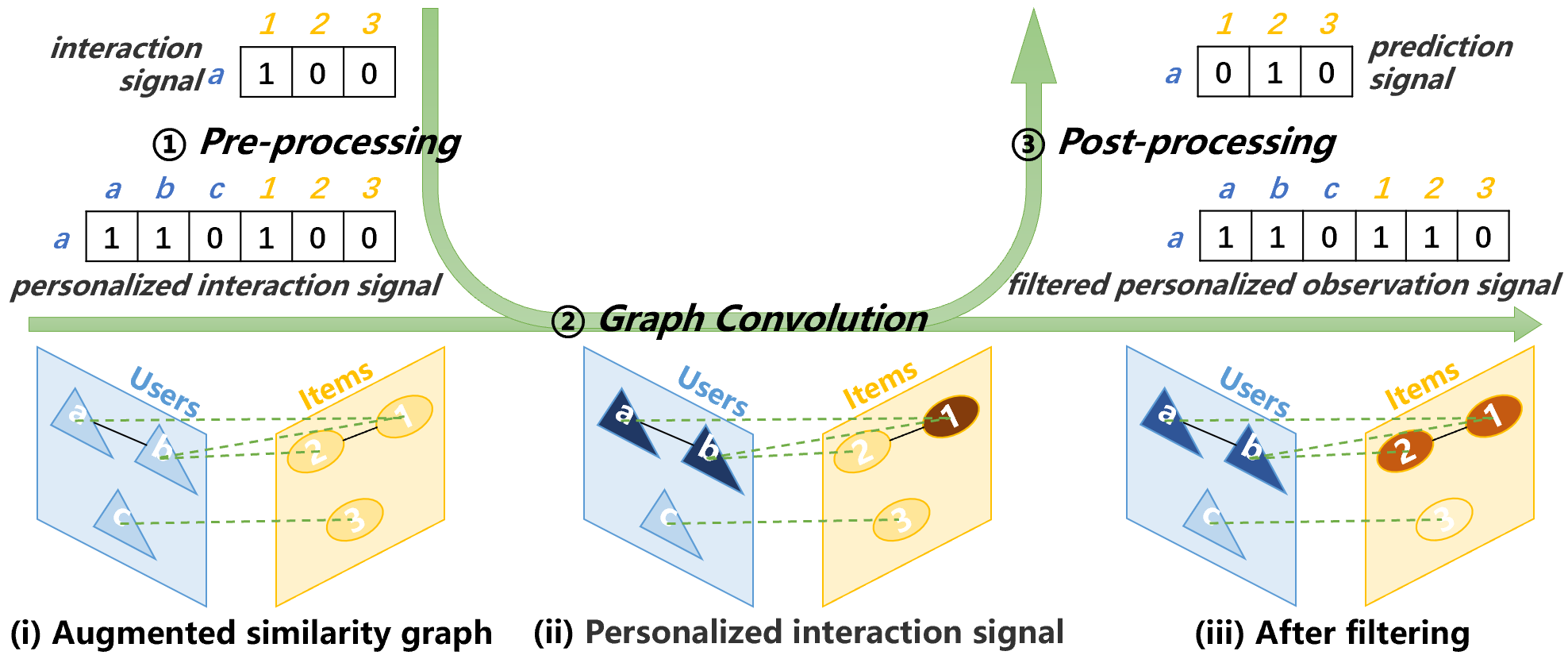}
		\caption{The proposed PGSP method. The personalized graph signal is constructed in the pre-processing step and input into the augmented similarity graph, which promotes users' potential interests.}
		\label{Fig:ours}
\end{figure*}

However, due to the sparsity of data, the interaction signal may not have enough information to restore the complete real preference through such global smoothing.
On the contrary, only using the low-frequency signals will overdraft the expressive ability of the interaction signal for the filtered signals only containing general preference information, which damages the personalized information.

To solve the above problems, we implement a mixed-frequency graph filter, which is a low-pass filter.
From the perspective of the spatial domain, the result of multiplication with augmented similarity matrix $A$ is neighborhood aggregation, which leads to local smoothing.
The right arrow in Fig. \ref{Fig:smooth} describes this local smoothing, after filtering, the node signal only diffuses to the directly connected nodes.
Theorem \ref{theorem.A} shows that $A$ is a linear low-pass filter with frequency response function $h(\lambda)=1-\lambda$, which indicates that the filtered signal is not globally smooth because it allows some high-frequency components of the signal to pass through.

\begin{theorem}
\label{theorem.A}
The augmented similarity matrix $A$ is a low-pass filter with a frequency response function $h(\lambda)=1-\lambda$.
\end{theorem}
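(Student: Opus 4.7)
The plan is to leverage the defining relation $L = I - A$ stated just above the theorem and then unfold the eigendecomposition of $L$. First I would solve for $A$ to obtain $A = I - L$. Next, invoking the eigendecomposition $L = U\Lambda U^T$ (with $U$ orthogonal and $\Lambda = \mathrm{diag}(\lambda_1,\ldots,\lambda_{m+n})$ the diagonal matrix of eigenvalues) together with the identity $I = UU^T$, I would rewrite $A = UU^T - U\Lambda U^T = U(I-\Lambda)U^T = U\,\mathrm{diag}(1-\lambda_1,\ldots,1-\lambda_{m+n})\,U^T$.

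Matching this expression against the canonical graph-filter form in Eq.~(\ref{eq.filter}), $\mathcal{H} = U\,\mathrm{diag}(h(\lambda_1),\ldots,h(\lambda_{m+n}))\,U^T$, the frequency response is forced to be $h(\lambda_i) = 1 - \lambda_i$ on the spectrum, which extends to the linear function $h(\lambda) = 1 - \lambda$ as claimed.

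To justify the \emph{low-pass} qualifier, I would then remark that $h(\lambda) = 1 - \lambda$ is strictly decreasing in $\lambda$, so smaller (low-frequency) eigenvalues receive a larger response than larger (high-frequency) eigenvalues, which is exactly the defining property of a low-pass filter. If desired, one can add that the spectrum of $L$ lies in a bounded real interval (since $A$ is symmetric, coming from the symmetric block construction in Eq.~(\ref{eq.ASG}) with $S_U, S_I$ Gram matrices built from $S_{UI}$), so the response stays well-behaved across the whole spectrum.

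The main ``obstacle'' here is mostly cosmetic rather than mathematical: the theorem is essentially a one-line consequence of the definition $L = I - A$ combined with the graph-filter formula (\ref{eq.filter}), so the proof is short. The only subtle point worth stating explicitly is why being linearly decreasing in $\lambda$ justifies labeling $A$ as \emph{low-pass} in the sense used earlier in the paper --- namely that $h$ monotonically attenuates higher frequencies relative to lower ones, even though, unlike $\mathcal{H}_{ideal}$, it does not entirely annihilate the high-frequency components and therefore produces only \emph{local} rather than global smoothing, matching the geometric picture in Fig.~\ref{Fig:smooth}.
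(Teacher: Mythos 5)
Your proposal is correct and follows essentially the same route as the paper: both derive $A = U\,\mathrm{diag}(1-\lambda_1,\ldots,1-\lambda_{m+n})\,U^T$ from $L = I - A$ and the eigendecomposition of $L$ (the paper does this by acting on eigenvectors, you by the matrix identity $A = UU^T - U\Lambda U^T$, which is the same computation). Your added remark on why a decreasing response function qualifies as low-pass is a harmless elaboration the paper leaves implicit.
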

\begin{proof}
First, we have:
\begin{displaymath}
  L=U\Lambda U^T=Udiag(\lambda_1,\lambda_2,...,\lambda_{m+n})U^T,
\end{displaymath}
which means that $L$ is a graph filter with a frequency response function of $h(\lambda)=\lambda$.

Then, we have:
\begin{displaymath}
  L=I-A.
\end{displaymath}

Suppose $\pmb{u}_i$ is the eigenvector of $L$ corresponds to $\lambda_i$, the derivation is as follows:
\begin{displaymath}
  L\pmb{u}_i=\lambda_i\pmb{u}_i=(I-A)\pmb{u}_i=\pmb{u}_i-A\pmb{u}_i.
\end{displaymath}
So, we have:
\begin{displaymath}
  A\pmb{u}_i=(1-\lambda_i)\pmb{u}_i,
\end{displaymath}
which means that $A$ and $L$ have the same eigenvectors, and the corresponding eigenvalues have the relationship:
\begin{displaymath}
  (\lambda_A)_i=1-\lambda_i,
\end{displaymath}
where $(\lambda_A)_i$ is the $i$-th largest eigenvalue of $A$.
Suppose:
\begin{displaymath}
\Lambda_A=diag((\lambda_A)_1,(\lambda_A)_2,...,(\lambda_A)_{m+n}).
\end{displaymath}
So, we have:
\begin{displaymath}
  A=U\Lambda_{A}U^T=Udiag(1-\lambda_1, 1-\lambda_2,...,1-\lambda_{m+n})U^T,
\end{displaymath}
which means that $A$ is a low-pass filter with a frequency response function of $h(\lambda)=1-\lambda$.
\end{proof}

To consider both general user preferences and personalized user preferences, we superimpose a linear low-pass filter that smooths signals locally on the basis of an ideal low-pass filter that smooths signals globally in a certain proportion as follows:
\begin{equation}
  \mathcal{H}=(1-\phi)\mathcal{H}_{ideal}+\phi A,
\end{equation}
where $\phi$ is a hyperparameter that controls the ratio of the globally smooth signal to the locally smooth signal.

\subsection{PGSP Pipeline}
\label{section.PGSP}
In this section, we present how to combine the personalized graph signal, the augmented similarity graph and the mixed-frequency graph filter by proposing a pipeline named Personalized Graph Signal Processing which consists of 
pre-processing, graph convolution and post-processing.

1) {\em Pre-processing step}. We first obtain the personalized interaction matrix by Eq. (\ref{r-tilde}).
Then we normalize it to obtain the normalized personalized interaction matrix as follows:
\begin{equation}
    \label{eq.r-norm}
    \tilde{R}_{norm}=\tilde{R} \tilde{D}^{\beta},
\end{equation}
where $\tilde{D}_{jj}=\sum_{i=1}^m\tilde{R}_{ij}$ is a diagonal matrix, and $\beta\le0$ is a hyperparameter.
Eq. (\ref{eq.r-norm}) means that the more users an item interacts with, the more likely it is a popular item, and the lower ability to express users' personalized preferences.

2) {\em Graph convolution step}. We use the proposed mixed-frequency graph filter to filter the normalized personalized graph signals to get the filtered personalized graph signals:
\begin{equation}
    \hat{\tilde{R}}_{norm}=\tilde{R}_{norm} \mathcal{H}.
\end{equation}

3) {\em  Post-processing step}. We restore the predicted personalized interaction matrix by multiplying $\tilde{D}^{-\beta}$, then the last $n$ columns are intercepted as the predicted users' preferences matrix:
\begin{equation}
  \hat{R}=(\hat{\tilde{R}}_{norm}\tilde{D}^{-\beta})_{:,-n:},
\end{equation}
where $:,-n:$ means to take all rows and last $n$ columns of the matrix.

\begin{table}\small
\centering
\caption{Statistics of the experimental data.}
\label{table.dataset}
\begin{tabular}{c|c|c|c|c}
\hline
\textbf{Dataset}     & \textbf{\#User} & \textbf{\#Item} & \textbf{\#Interaction} & \textbf{Density} \\ \hline
{\textit{Gowalla}}     & 29,858          & 40,981          & 1,027,370              & 0.084\%          \\
{\textit{Yelp2018}}    & 31,668          & 38,048          & 1,561,406              & 0.130\%          \\
{\textit{Amazon-Book}} & 52,643          & 91,599          & 2,984,108              & 0.062\%          \\ \hline
\end{tabular}
\end{table}

\begin{table*}[t]\small
\centering
\caption{Performance comparison to state-of-the-art CF methods in recent two years. \textit{RI} represents the relative improvement between PGSP and the corresponding method.}
\label{table.result}
\begin{tabular}{c|cc|cc|cc|c|cc}
\hline
\textbf{Dataset}                               & \multicolumn{2}{c|}{\textbf{Gowalla}}                               & \multicolumn{2}{c|}{\textbf{Yelp2018}}                              & \multicolumn{2}{c|}{\textbf{Amazon-book}}                           & \multicolumn{1}{c|}{\multirow{2}{*}{\textbf{\begin{tabular}[c]{@{}c@{}}Reported\\ by\end{tabular}}}} & \multicolumn{2}{c}{Avg RI}                                \\ \cline{1-7} \cline{9-10} 
Method                                         & recall                           & ndcg                             & recall                           & ndcg                             & recall                           & ndcg                             & \multicolumn{1}{c|}{}                                                                                & recall                      & ndcg                        \\ \hline
LR-GCCF~\cite{LR-GCCF}   & 0.1519                           & 0.1285                           & 0.0561                           & 0.0343                           & 0.0335                           & 0.0265                           &                                                                           \cite{SimpleX}                    & 54.88\%                     & 71.75\%                     \\
ENMF~\cite{ENMF}         & 0.1523                           & 0.1315                           & 0.0624                           & 0.0515                           & 0.0359                           & 0.0281                           &                                                                       \cite{SimpleX}              & 45.79\%                     & 47.70\%                     \\
NIA-GCN~\cite{NIA-GCN}   & 0.1726                           & 0.1358                           & 0.0599                           & 0.0491                           & 0.0369                           & 0.0287                           &                                                               \cite{NIA-GCN}~\cite{NGAT4Rec}   & 40.65\%                     & 46.80\%                     \\
LightGCN~\cite{LightGCN} & 0.1830                           & 0.1554                           & 0.0649                           & 0.0530                           & 0.0411                           & 0.0315                           &                                                                \cite{LightGCN}      & 28.95\%                     & 32.89\%                     \\
DGCF~\cite{DGCF}         & 0.1842                           & 0.1561                           & 0.0654                           & 0.0534                           & 0.0422                           & 0.0324                           &                                                                 \cite{DGCF}         & 26.94\%                     & 30.75\%                     \\
NGAT4Rec~\cite{NGAT4Rec} & -                                & -                                & 0.0675                           & 0.0554                           & 0.0457                           & 0.0358                           &                                                                \cite{NGAT4Rec}  & 30.27\%                     & 34.18\%                     \\
SGL-ED~\cite{SGL-ED}     & -                                & -                                & 0.0675                           & 0.0555                           & 0.0478                           & 0.0379                           &                                                              \cite{SGL-ED}     & 26.86\%                     & 29.57\%                     \\
DGCF~\cite{DGCF_li}     & 0.1891                           & 0.1602                           & 0.0703                           & 0.0575                           & 0.0476                           & 0.0369                           &                                                               \cite{DGCF_li}      & 17.16\%                     & 19.95\%                     \\
MF-CCL~\cite{SimpleX}    & 0.1837                           & 0.1493                           & 0.0698                           & 0.0572                           & 0.0559                           & 0.0447                           &                                                                \cite{SimpleX}       & 11.01\%                     & 13.36\%                     \\
SimpleX~\cite{SimpleX}   & 0.1872                           & 0.1557                           & 0.0701                           & 0.0575                           & 0.0583                           & 0.0468                           &                                                                 \cite{SimpleX}        & 8.47\%                      & 9.75\%                      \\
UltraGCN~\cite{UltraGCN} & 0.1862                           & 0.1580                           & 0.0683                           & 0.0561                           & 0.0681                           & 0.0556                           &                                                                 \cite{UltraGCN}    & 3.70\%                      & 3.51\%                      \\
GF-CF~\cite{GFCF}        & 0.1849                           & 0.1518                           & 0.0697                           & 0.0571                           & 0.0710                           & 0.0584                           &                                                                 \cite{GFCF}    & 1.83\%                      & 2.61\%                      \\
\hline
PGSP (ours)                                    & \textbf{0.1916} & \textbf{0.1605} & \textbf{0.0710} & \textbf{0.0583} & \textbf{0.0712} & \textbf{0.0587} &                                                                                                      & - & - \\ \hline
\end{tabular}
\end{table*}

We illustrate how the PGSP method works through an example as shown in Fig. \ref{Fig:ours}.
For simplicity, we use $1$ to represent a strong node signal, and $0$ to represent a weak node signal.
Suppose user $a$ interacted with item $1$, user $b$ interacted with items $1$ and $2$, user $c$ interacted with item $3$, and user $a$ is the target user.
There is a strong similarity between the two connected nodes.
User $a$ is similar to user $b$ because they both have interacted with item $1$.
Item $1$ is similar to item $2$ because they both have interacted with user $b$.
We first construct an augmented similarity graph as shown in Fig. \ref{Fig:ours}(i), which includes not only the user-user similarity (the blue part) and item-item similarity (the yellow part), but also the user-item interaction information (the dashed line).
In pre-processing, the original interaction signal $(1,0,0)$ (indicating that user $a$ has only interacted with item $1$) concatenates the similarity signal $(1,1,0)$ (indicating that user $a$ is similar to user $b$ and himself/herself, but dissimilar to user $c$) and becomes $(1,1,0,1,0,0)$ (user $a$'s personalized graph signal).
The personalized graph signal is input into the augmented similarity graph as shown in Fig. \ref{Fig:ours}(ii).
In graph convolution, with the proposed mixed-frequency graph filter, we obtain the predicted signal, which is a combination of globally smooth signals and locally smooth signals as shown in Fig. \ref{Fig:ours}(iii).
In this process, on one hand, on the item-item similarity graph, the strong signal of item $1$ will be transmitted to item $2$, causing the signal of item $2$ to be strengthened.
On the other hand, on the user-item similarity graph, user $b$ who is similar to user $a$ has a strong signal, which will promote the signal strength of item $2$.
User $c$ who is dissimilar to user $a$ has a weak signal, which will weaken the signal strength of item $3$.
Our method makes the items that users like have a higher intensity than those that users dislike.
Finally, the predicted signal is $(0,1,0)$, which means that the model will recommend item $2$, which may be the potential interests of user $a$ because user $b$ who is similar to user $a$ has interacted with it, to user $a$.

\subsection{Comparison with GF-CF}
As a GSP-based collaborative filtering method, GF-CF~\cite{GFCF} is very similar to PGSP, and has achieved excellent results in recommendation tasks.
However, GF-CF only uses user interaction signals as input signals.
In addition, GF-CF classifies various collaborative filtering methods as filters with different frequency response functions, without analyzing the role of high-frequency components of observed signals.
In contrast, PGSP depicts user interests more accurately through the personalized graph signal containing richer user information, and constructs an augmented similarity graph containing more graph topology information, to more effectively characterize user interests.
We not only propose a mixed-frequency graph filter to introduce useful information in the high-frequency component of the observed signals by combining an ideal low-pass filter and a linear low-pass filter, but also conduct an experimental analysis on the influence of high-frequency information on recommendation accuracy.

\section{Experiments}
We conduct extensive experiments to evaluate our method.
Specifically, we evaluate the accuracy of the PGSP method, and analyze the effects of the personalized graph signal, augmented similarity graph, and mixed-frequency graph filter.

\subsection{Experimental Settings}
\begin{figure*}[t!]
\centering
\includegraphics[width=0.985\linewidth]{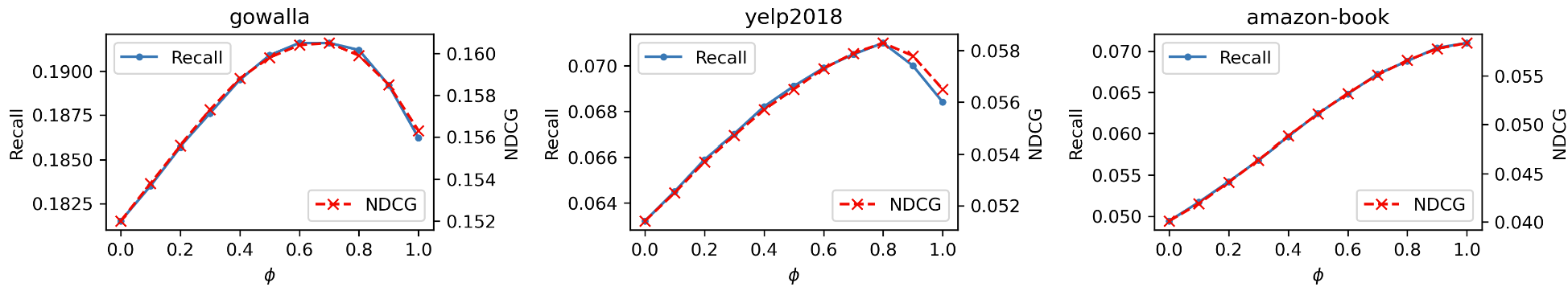}
\caption{Effectiveness of high-frequency signals on three datasets. The $x$-axis is $\phi$, which controls the weight of the high frequency (locally smooth but globally rough) information in the prediction signal. We can see that high-frequency signals can help to improve accuracy in all three datasets.}
\label{Fig.high_work}
\end{figure*}

\subsubsection{Dataset}
We use three public datasets to evaluate the recommendation accuracy of our proposed method, including  \textit{Gowalla}, \textit{Yelp2018} and \textit{Amazon-Book}, released by the LightGCN work~\cite{LightGCN}.
The statistics of the experimental data are shown in Table \ref{table.dataset}.
For each dataset, we randomly select 80\% of the historical
interactions of each user to constitute the training set, and treat the remaining as the test set.
From the training set, we randomly select 10\% of interactions as the validation set to tune hyper-parameters.

\subsubsection{Metrics}
We use two popular evaluation protocols for top-N recommendation~\cite{LightGCN}: \textit{Recall@K} and \textit{NDCG@K} with $K=20$.

\subsubsection{Baselines}
We compare the proposed PGSP method with three types of state-of-the-art methods in recent years as follows:
\begin{itemize}
\item MF-based methods, whose basic idea is to represent users and items in the same latent space with latent vectors, and use dot products as the strength of interaction possible between users and items, including ENMF \cite{ENMF}, MF-CCL \cite{SimpleX} and SimpleX \cite{SimpleX}.
\item GNN-based methods, whose basic idea is to embed users and items in the space by means of message transmission. Each user and item has an embedding vector, and the dot product is used to represent the interaction possibility of different users and items. Including LR-GCCF \cite{LR-GCCF}, NIA-GCN \cite{NIA-GCN}, LightGCN \cite{LightGCN}, DGCF \cite{DGCF}, NGAT4Rec \cite{NGAT4Rec}, SGL-ED \cite{SGL-ED},  UltraGCN \cite{UltraGCN} and DGCF \cite{DGCF_li}.
\item GSP-based GF-CF method \cite{GFCF}, which sums up different collaborative filtering methods, including those based on the neighborhood, matrix factorization and graph neural network, into low-pass filters with different frequency response functions.
\end{itemize}
We only use these latest works for comparison, and the classic baselines are omitted due to space limitations.

\subsection{Performance Comparison}
Table \ref{table.result} reports the performance comparison results and the relative improvement on each metric.
Since the datasets partition and evaluation metrics of all compared baselines are consistent with our method, we directly present the results reported in them, we can ensure that we present the optimal performance of various methods on the premise of fair comparison. The experimental setting of all baselines can be found at \url{https://openbenchmark.github.io/BARS}~\cite{zhu2022bars}.
We have the following observations from the  results:
\begin{itemize}
\item Compared with SOTA methods in recent two years, our method has a significant improvement, especially in \textit{NDCG}, which shows that our method not only optimizes the recall score, but also optimizes the ranking order.
\item As a nonparametric method, PGSP is better than recently proposed parametric methods with a large margin, especially on \textit{Amazon-Book}, a sparse dataset. This should be due to sparse data in CF, in which the data-driven methods may not work well.
\item Our method is superior to GF-CF, which is also based on GSP. This is because compared with GF-CF, our method uses the augmented similarity matrix and the personalized graph signal, and takes into account the role of high-frequency information and the normalized matrix.
\end{itemize}

\subsection{Effectiveness of High-frequency Signal}
We construct different mixed-frequency graph filters by controlling $\phi$ to change the proportion of globally smooth signals that reflect general user preferences and locally smooth but globally rough signals that reflect personalized user preferences.
The results are shown in Fig. \ref{Fig.high_work}. 
We observe that the globally smooth signal filtered only by the ideal low-pass filter ($\phi=0$) is not optimal.
The accuracy can be improved by superimposing locally smooth signals filtered by the linear low-pass filter in proportion.
Specifically, on \textit{Gowalla} and \textit{Yelp2018}, with the increase of locally smooth signals, the accuracy first increases and then decreases, and on \textit{Amazon-Book}, the accuracy keeps increasing with the more locally smooth signal.

The density of the datasets is related to the relative performance increase obtained by adding high-frequency (locally smooth but globally rough) signals.
As mentioned earlier, the expression ability of sparse interaction signals is weak, so we introduce high-frequency signals.
According to this viewpoint, on more sparse datasets, we should observe higher performance improvement due to the introduction of high-frequency signals.
In order to give a more convincing conclusion, we carried out a new experiment and added a dataset \textit{ML-1M}~\cite{harper2015movielens}.
The density of the datasets has such a relationship: \textit{Amazon-Book} (0.062\%) < \textit{Gowalla} (0.084\%) < \textit{Yelp2018} (0.130\%) < \textit{ML-1M} (4.845\%).
Meanwhile, we conducted a 5-fold cross-validation and reported the average results to avoid the deviation caused by the dataset partition.
We observed that compared with using only low-frequency signals, the relative improvements brought by high-frequency signals are as follows: \textit{Amazon-Book} (81.25\%) > \textit{Gowalla} (17.89\%) > \textit{Yelp2018} (4.47\%) > \textit{ML-1M} (0.79\%), which empirically confirmed our viewpoint.
Since the recommender system datasets are sparse in most realistic scenarios, we believe the introduction of high-frequency signals is important.

\begin{table}\small
\centering
\caption{Training time comparison of three types of methods on the Gowalla dataset. In ENMF and LightGCN, we show the training time of one epoch, and they need multiple epochs during training. In PGSP and GF-CF, we show the total time required to complete model training.}
\label{tab.running}
\begin{tabular}{c|c|c|c|c} 
\hline
{Method}               & {PGSP} & {ENMF} & {LightGCN} & GF-CF \\ 
\hline
{Running time}         & 8m42s         & 11m5s         & 69m40s     & 5m31s        \\
{Need multiple epochs} & \xmark             & \cmark             & \cmark    & \xmark              \\
\hline
\end{tabular}
\end{table}
\subsection{Training Time Comparison}
Different from the methods based on matrix factorization (MF) and graph neural network (GNN), PGSP is a nonparametric method, which does not require parameter learning, so it can achieve very high training efficiency.
We compare PGSP with MF and GNN-based CF methods, and another GSP-based method GF-CF~\cite{GFCF} in Table \ref{tab.running}.
It should be noted that the methods based on MF and GNN require multiple epochs to obtain a trained model.
For ENMF and LightGCN, we show the training time for only one epoch, and they need to train multiple epochs until convergence.
For PGSP and GF-CF, we show the total time required to complete the model training.
It can be observed that the total time required to train PGSP is even less than the time to train one epoch in MF and GNN-based CF methods, which confirms the high efficiency of PGSP.
Due to the introduction of additional information, \ours requires a slightly longer running time than the pure GSP-based GF-CF method.

\section{Conclusion}
In this work, we further promote the performance of GSP in CF tasks.
We establish the similarity relationship from the perspective of random walk.
Based on the proposed similarity, a personalized graph signal with more personalized information is proposed to describe users more accurately, and an augmented similarity graph with more topology information is constructed to utilize the signal more effectively.
We reveal the importance of high-frequency signals in observed signals, and construct a mixed-frequency graph filter to simultaneously use the globally smooth signal and the locally smooth signal.
Finally, the effectiveness of the proposed method is analyzed through comprehensive experiments 
on three public datasets.

\begin{acks}
This work was supported by the National Natural Science Foundation of China (NSFC) under Grants 62172106 and 61932007.
\end{acks}

\bibliographystyle{ACM-Reference-Format}
\bibliography{sample-base}


\begin{thebibliography}{38}


\ifx \showCODEN    \undefined \def \showCODEN     #1{\unskip}     \fi
\ifx \showDOI      \undefined \def \showDOI       #1{#1}\fi
\ifx \showISBNx    \undefined \def \showISBNx     #1{\unskip}     \fi
\ifx \showISBNxiii \undefined \def \showISBNxiii  #1{\unskip}     \fi
\ifx \showISSN     \undefined \def \showISSN      #1{\unskip}     \fi
\ifx \showLCCN     \undefined \def \showLCCN      #1{\unskip}     \fi
\ifx \shownote     \undefined \def \shownote      #1{#1}          \fi
\ifx \showarticletitle \undefined \def \showarticletitle #1{#1}   \fi
\ifx \showURL      \undefined \def \showURL       {\relax}        \fi
\providecommand\bibfield[2]{#2}
\providecommand\bibinfo[2]{#2}
\providecommand\natexlab[1]{#1}
\providecommand\showeprint[2][]{arXiv:#2}

\bibitem[Chen et~al\mbox{.}(2021)]%
        {chen2021scalable}
\bibfield{author}{\bibinfo{person}{Chao Chen}, \bibinfo{person}{Dongsheng Li},
  \bibinfo{person}{Junchi Yan}, \bibinfo{person}{Hanchi Huang}, {and}
  \bibinfo{person}{Xiaokang Yang}.} \bibinfo{year}{2021}\natexlab{}.
\newblock \showarticletitle{Scalable and explainable 1-bit matrix completion
  via graph signal learning}. In \bibinfo{booktitle}{\emph{Proceedings of the
  AAAI Conference on Artificial Intelligence}}, Vol.~\bibinfo{volume}{35}.
  \bibinfo{pages}{7011--7019}.
\newblock


\bibitem[Chen et~al\mbox{.}(2020b)]%
        {ENMF}
\bibfield{author}{\bibinfo{person}{Chong Chen}, \bibinfo{person}{Min Zhang},
  \bibinfo{person}{Yongfeng Zhang}, \bibinfo{person}{Yiqun Liu}, {and}
  \bibinfo{person}{Shaoping Ma}.} \bibinfo{year}{2020}\natexlab{b}.
\newblock \showarticletitle{Efficient neural matrix factorization without
  sampling for recommendation}.
\newblock \bibinfo{journal}{\emph{ACM Transactions on Information Systems
  (TOIS)}} \bibinfo{volume}{38}, \bibinfo{number}{2} (\bibinfo{year}{2020}),
  \bibinfo{pages}{1--28}.
\newblock


\bibitem[Chen et~al\mbox{.}(2020a)]%
        {LR-GCCF}
\bibfield{author}{\bibinfo{person}{Lei Chen}, \bibinfo{person}{Le Wu},
  \bibinfo{person}{Richang Hong}, \bibinfo{person}{Kun Zhang}, {and}
  \bibinfo{person}{Meng Wang}.} \bibinfo{year}{2020}\natexlab{a}.
\newblock \showarticletitle{Revisiting graph based collaborative filtering: A
  linear residual graph convolutional network approach}. In
  \bibinfo{booktitle}{\emph{Proceedings of the AAAI conference on artificial
  intelligence}}, Vol.~\bibinfo{volume}{34}. \bibinfo{pages}{27--34}.
\newblock


\bibitem[Chen et~al\mbox{.}(2015)]%
        {VM}
\bibfield{author}{\bibinfo{person}{Siheng Chen}, \bibinfo{person}{Aliaksei
  Sandryhaila}, \bibinfo{person}{Jos{\'e}~MF Moura}, {and}
  \bibinfo{person}{Jelena Kova{\v{c}}evi{\'c}}.}
  \bibinfo{year}{2015}\natexlab{}.
\newblock \showarticletitle{Signal recovery on graphs: Variation minimization}.
\newblock \bibinfo{journal}{\emph{IEEE Transactions on Signal Processing}}
  \bibinfo{volume}{63}, \bibinfo{number}{17} (\bibinfo{year}{2015}),
  \bibinfo{pages}{4609--4624}.
\newblock


\bibitem[Chung and Graham(1997)]%
        {Chung97}
\bibfield{author}{\bibinfo{person}{Fan~RK Chung} {and}
  \bibinfo{person}{Fan~Chung Graham}.} \bibinfo{year}{1997}\natexlab{}.
\newblock \bibinfo{booktitle}{\emph{Spectral graph theory}}.
\newblock Number~92. \bibinfo{publisher}{American Mathematical Soc.}
\newblock


\bibitem[Cui et~al\mbox{.}(2020)]%
        {AGE}
\bibfield{author}{\bibinfo{person}{Ganqu Cui}, \bibinfo{person}{Jie Zhou},
  \bibinfo{person}{Cheng Yang}, {and} \bibinfo{person}{Zhiyuan Liu}.}
  \bibinfo{year}{2020}\natexlab{}.
\newblock \showarticletitle{Adaptive graph encoder for attributed graph
  embedding}. In \bibinfo{booktitle}{\emph{Proceedings of the 26th ACM SIGKDD
  International Conference on Knowledge Discovery \& Data Mining}}.
  \bibinfo{pages}{976--985}.
\newblock


\bibitem[Feng et~al\mbox{.}(2019)]%
        {AGCN}
\bibfield{author}{\bibinfo{person}{Chenyuan Feng}, \bibinfo{person}{Zuozhu
  Liu}, \bibinfo{person}{Shaowei Lin}, {and} \bibinfo{person}{Tony~QS Quek}.}
  \bibinfo{year}{2019}\natexlab{}.
\newblock \showarticletitle{Attention-based graph convolutional network for
  recommendation system}. In \bibinfo{booktitle}{\emph{ICASSP 2019-2019 IEEE
  International Conference on Acoustics, Speech and Signal Processing
  (ICASSP)}}. IEEE, \bibinfo{pages}{7560--7564}.
\newblock


\bibitem[Grover and Leskovec(2016)]%
        {Node2Vec}
\bibfield{author}{\bibinfo{person}{Aditya Grover} {and} \bibinfo{person}{Jure
  Leskovec}.} \bibinfo{year}{2016}\natexlab{}.
\newblock \showarticletitle{node2vec: Scalable feature learning for networks}.
  In \bibinfo{booktitle}{\emph{Proceedings of the 22nd ACM SIGKDD international
  conference on Knowledge discovery and data mining}}.
  \bibinfo{pages}{855--864}.
\newblock


\bibitem[Han et~al\mbox{.}(2000)]%
        {Han2000}
\bibfield{author}{\bibinfo{person}{Jiawei Han}, \bibinfo{person}{Jian Pei},
  {and} \bibinfo{person}{Yiwen Yin}.} \bibinfo{year}{2000}\natexlab{}.
\newblock \showarticletitle{Mining frequent patterns without candidate
  generation}.
\newblock \bibinfo{journal}{\emph{ACM sigmod record}} \bibinfo{volume}{29},
  \bibinfo{number}{2} (\bibinfo{year}{2000}), \bibinfo{pages}{1--12}.
\newblock


\bibitem[Harper and Konstan(2015)]%
        {harper2015movielens}
\bibfield{author}{\bibinfo{person}{F~Maxwell Harper} {and}
  \bibinfo{person}{Joseph~A Konstan}.} \bibinfo{year}{2015}\natexlab{}.
\newblock \showarticletitle{The movielens datasets: History and context}.
\newblock \bibinfo{journal}{\emph{Acm transactions on interactive intelligent
  systems (tiis)}} \bibinfo{volume}{5}, \bibinfo{number}{4}
  (\bibinfo{year}{2015}), \bibinfo{pages}{1--19}.
\newblock


\bibitem[He et~al\mbox{.}(2020)]%
        {LightGCN}
\bibfield{author}{\bibinfo{person}{Xiangnan He}, \bibinfo{person}{Kuan Deng},
  \bibinfo{person}{Xiang Wang}, \bibinfo{person}{Yan Li},
  \bibinfo{person}{Yongdong Zhang}, {and} \bibinfo{person}{Meng Wang}.}
  \bibinfo{year}{2020}\natexlab{}.
\newblock \showarticletitle{Lightgcn: Simplifying and powering graph
  convolution network for recommendation}. In
  \bibinfo{booktitle}{\emph{Proceedings of the 43rd International ACM SIGIR
  conference on research and development in Information Retrieval}}.
  \bibinfo{pages}{639--648}.
\newblock


\bibitem[He et~al\mbox{.}(2017)]%
        {he2017neural}
\bibfield{author}{\bibinfo{person}{Xiangnan He}, \bibinfo{person}{Lizi Liao},
  \bibinfo{person}{Hanwang Zhang}, \bibinfo{person}{Liqiang Nie},
  \bibinfo{person}{Xia Hu}, {and} \bibinfo{person}{Tat-Seng Chua}.}
  \bibinfo{year}{2017}\natexlab{}.
\newblock \showarticletitle{Neural collaborative filtering}. In
  \bibinfo{booktitle}{\emph{Proceedings of the 26th international conference on
  world wide web}}. \bibinfo{pages}{173--182}.
\newblock


\bibitem[Herlocker et~al\mbox{.}(1999)]%
        {Herlocker}
\bibfield{author}{\bibinfo{person}{Jonathan~L Herlocker},
  \bibinfo{person}{Joseph~A Konstan}, \bibinfo{person}{Al Borchers}, {and}
  \bibinfo{person}{John Riedl}.} \bibinfo{year}{1999}\natexlab{}.
\newblock \showarticletitle{An algorithmic framework for performing
  collaborative filtering}. In \bibinfo{booktitle}{\emph{Proceedings of the
  22nd annual international ACM SIGIR conference on Research and development in
  information retrieval}}. \bibinfo{pages}{230--237}.
\newblock


\bibitem[Huang et~al\mbox{.}(2018)]%
        {Huang18}
\bibfield{author}{\bibinfo{person}{Weiyu Huang}, \bibinfo{person}{Antonio~G
  Marques}, {and} \bibinfo{person}{Alejandro~R Ribeiro}.}
  \bibinfo{year}{2018}\natexlab{}.
\newblock \showarticletitle{Rating prediction via graph signal processing}.
\newblock \bibinfo{journal}{\emph{IEEE Transactions on Signal Processing}}
  \bibinfo{volume}{66}, \bibinfo{number}{19} (\bibinfo{year}{2018}),
  \bibinfo{pages}{5066--5081}.
\newblock


\bibitem[Koren et~al\mbox{.}(2009)]%
        {Koren09}
\bibfield{author}{\bibinfo{person}{Yehuda Koren}, \bibinfo{person}{Robert
  Bell}, {and} \bibinfo{person}{Chris Volinsky}.}
  \bibinfo{year}{2009}\natexlab{}.
\newblock \showarticletitle{Matrix factorization techniques for recommender
  systems}.
\newblock \bibinfo{journal}{\emph{Computer}} \bibinfo{volume}{42},
  \bibinfo{number}{8} (\bibinfo{year}{2009}), \bibinfo{pages}{30--37}.
\newblock


\bibitem[Liu et~al\mbox{.}(2021)]%
        {DGCF_li}
\bibfield{author}{\bibinfo{person}{Haodong Liu}, \bibinfo{person}{Bo Yang},
  {and} \bibinfo{person}{Dongsheng Li}.} \bibinfo{year}{2021}\natexlab{}.
\newblock \showarticletitle{Graph Collaborative Filtering Based on Dual-Message
  Propagation Mechanism}.
\newblock \bibinfo{journal}{\emph{IEEE Transactions on Cybernetics}}
  (\bibinfo{year}{2021}).
\newblock


\bibitem[Liu et~al\mbox{.}(2022)]%
        {liu2022parameter}
\bibfield{author}{\bibinfo{person}{Jiahao Liu}, \bibinfo{person}{Dongsheng Li},
  \bibinfo{person}{Hansu Gu}, \bibinfo{person}{Tun Lu}, \bibinfo{person}{Peng
  Zhang}, {and} \bibinfo{person}{Ning Gu}.} \bibinfo{year}{2022}\natexlab{}.
\newblock \showarticletitle{Parameter-free Dynamic Graph Embedding for Link
  Prediction}.
\newblock \bibinfo{journal}{\emph{arXiv preprint arXiv:2210.08189}}
  (\bibinfo{year}{2022}).
\newblock


\bibitem[Ma et~al\mbox{.}(2016)]%
        {Diffusion}
\bibfield{author}{\bibinfo{person}{Jeremy Ma}, \bibinfo{person}{Weiyu Huang},
  \bibinfo{person}{Santiago Segarra}, {and} \bibinfo{person}{Alejandro
  Ribeiro}.} \bibinfo{year}{2016}\natexlab{}.
\newblock \showarticletitle{Diffusion filtering of graph signals and its use in
  recommendation systems}. In \bibinfo{booktitle}{\emph{2016 IEEE International
  Conference on Acoustics, Speech and Signal Processing (ICASSP)}}. IEEE,
  \bibinfo{pages}{4563--4567}.
\newblock


\bibitem[Mao et~al\mbox{.}(2021a)]%
        {SimpleX}
\bibfield{author}{\bibinfo{person}{Kelong Mao}, \bibinfo{person}{Jieming Zhu},
  \bibinfo{person}{Jinpeng Wang}, \bibinfo{person}{Quanyu Dai},
  \bibinfo{person}{Zhenhua Dong}, \bibinfo{person}{Xi Xiao}, {and}
  \bibinfo{person}{Xiuqiang He}.} \bibinfo{year}{2021}\natexlab{a}.
\newblock \showarticletitle{SimpleX: A Simple and Strong Baseline for
  Collaborative Filtering}. In \bibinfo{booktitle}{\emph{Proceedings of the
  30th ACM International Conference on Information \& Knowledge Management}}.
  \bibinfo{pages}{1243--1252}.
\newblock


\bibitem[Mao et~al\mbox{.}(2021b)]%
        {UltraGCN}
\bibfield{author}{\bibinfo{person}{Kelong Mao}, \bibinfo{person}{Jieming Zhu},
  \bibinfo{person}{Xi Xiao}, \bibinfo{person}{Biao Lu},
  \bibinfo{person}{Zhaowei Wang}, {and} \bibinfo{person}{Xiuqiang He}.}
  \bibinfo{year}{2021}\natexlab{b}.
\newblock \showarticletitle{UltraGCN: Ultra Simplification of Graph
  Convolutional Networks for Recommendation}. In
  \bibinfo{booktitle}{\emph{Proceedings of the 30th ACM International
  Conference on Information \& Knowledge Management}}.
  \bibinfo{pages}{1253--1262}.
\newblock


\bibitem[Narang et~al\mbox{.}(2013a)]%
        {Narang13a}
\bibfield{author}{\bibinfo{person}{Sunil~K Narang}, \bibinfo{person}{Akshay
  Gadde}, {and} \bibinfo{person}{Antonio Ortega}.}
  \bibinfo{year}{2013}\natexlab{a}.
\newblock \showarticletitle{Signal processing techniques for interpolation in
  graph structured data}. In \bibinfo{booktitle}{\emph{2013 IEEE International
  Conference on Acoustics, Speech and Signal Processing}}. IEEE,
  \bibinfo{pages}{5445--5449}.
\newblock


\bibitem[Narang et~al\mbox{.}(2013b)]%
        {Narang13b}
\bibfield{author}{\bibinfo{person}{Sunil~K Narang}, \bibinfo{person}{Akshay
  Gadde}, \bibinfo{person}{Eduard Sanou}, {and} \bibinfo{person}{Antonio
  Ortega}.} \bibinfo{year}{2013}\natexlab{b}.
\newblock \showarticletitle{Localized iterative methods for interpolation in
  graph structured data}. In \bibinfo{booktitle}{\emph{2013 IEEE Global
  Conference on Signal and Information Processing}}. IEEE,
  \bibinfo{pages}{491--494}.
\newblock


\bibitem[Nt and Maehara(2019)]%
        {Revisit}
\bibfield{author}{\bibinfo{person}{Hoang Nt} {and} \bibinfo{person}{Takanori
  Maehara}.} \bibinfo{year}{2019}\natexlab{}.
\newblock \showarticletitle{Revisiting graph neural networks: All we have is
  low-pass filters}.
\newblock \bibinfo{journal}{\emph{arXiv preprint arXiv:1905.09550}}
  (\bibinfo{year}{2019}).
\newblock


\bibitem[Ortega et~al\mbox{.}(2018)]%
        {GSP}
\bibfield{author}{\bibinfo{person}{Antonio Ortega}, \bibinfo{person}{Pascal
  Frossard}, \bibinfo{person}{Jelena Kova{\v{c}}evi{\'c}},
  \bibinfo{person}{Jos{\'e}~MF Moura}, {and} \bibinfo{person}{Pierre
  Vandergheynst}.} \bibinfo{year}{2018}\natexlab{}.
\newblock \showarticletitle{Graph signal processing: Overview, challenges, and
  applications}.
\newblock \bibinfo{journal}{\emph{Proc. IEEE}} \bibinfo{volume}{106},
  \bibinfo{number}{5} (\bibinfo{year}{2018}), \bibinfo{pages}{808--828}.
\newblock


\bibitem[Sarwar et~al\mbox{.}(2000)]%
        {sarwar2000application}
\bibfield{author}{\bibinfo{person}{Badrul Sarwar}, \bibinfo{person}{George
  Karypis}, \bibinfo{person}{Joseph Konstan}, {and} \bibinfo{person}{John
  Riedl}.} \bibinfo{year}{2000}\natexlab{}.
\newblock \bibinfo{booktitle}{\emph{Application of dimensionality reduction in
  recommender system-a case study}}.
\newblock \bibinfo{type}{{T}echnical {R}eport}. \bibinfo{institution}{Minnesota
  Univ Minneapolis Dept of Computer Science}.
\newblock


\bibitem[Sarwar et~al\mbox{.}(2001)]%
        {Sarwar}
\bibfield{author}{\bibinfo{person}{Badrul Sarwar}, \bibinfo{person}{George
  Karypis}, \bibinfo{person}{Joseph Konstan}, {and} \bibinfo{person}{John
  Riedl}.} \bibinfo{year}{2001}\natexlab{}.
\newblock \showarticletitle{Item-based collaborative filtering recommendation
  algorithms}. In \bibinfo{booktitle}{\emph{Proceedings of the 10th
  international conference on World Wide Web}}. \bibinfo{pages}{285--295}.
\newblock


\bibitem[Shen et~al\mbox{.}(2021)]%
        {GFCF}
\bibfield{author}{\bibinfo{person}{Yifei Shen}, \bibinfo{person}{Yongji Wu},
  \bibinfo{person}{Yao Zhang}, \bibinfo{person}{Caihua Shan},
  \bibinfo{person}{Jun Zhang}, \bibinfo{person}{B~Khaled Letaief}, {and}
  \bibinfo{person}{Dongsheng Li}.} \bibinfo{year}{2021}\natexlab{}.
\newblock \showarticletitle{How Powerful is Graph Convolution for
  Recommendation?}. In \bibinfo{booktitle}{\emph{Proceedings of the 30th ACM
  International Conference on Information \& Knowledge Management}}.
  \bibinfo{pages}{1619--1629}.
\newblock


\bibitem[Song et~al\mbox{.}(2020)]%
        {NGAT4Rec}
\bibfield{author}{\bibinfo{person}{Jinbo Song}, \bibinfo{person}{Chao Chang},
  \bibinfo{person}{Fei Sun}, \bibinfo{person}{Xinbo Song}, {and}
  \bibinfo{person}{Peng Jiang}.} \bibinfo{year}{2020}\natexlab{}.
\newblock \showarticletitle{NGAT4Rec: Neighbor-Aware Graph Attention Network
  For Recommendation}.
\newblock \bibinfo{journal}{\emph{arXiv preprint arXiv:2010.12256}}
  (\bibinfo{year}{2020}).
\newblock


\bibitem[Sun et~al\mbox{.}(2020)]%
        {NIA-GCN}
\bibfield{author}{\bibinfo{person}{Jianing Sun}, \bibinfo{person}{Yingxue
  Zhang}, \bibinfo{person}{Wei Guo}, \bibinfo{person}{Huifeng Guo},
  \bibinfo{person}{Ruiming Tang}, \bibinfo{person}{Xiuqiang He},
  \bibinfo{person}{Chen Ma}, {and} \bibinfo{person}{Mark Coates}.}
  \bibinfo{year}{2020}\natexlab{}.
\newblock \showarticletitle{Neighbor interaction aware graph convolution
  networks for recommendation}. In \bibinfo{booktitle}{\emph{Proceedings of the
  43rd International ACM SIGIR Conference on Research and Development in
  Information Retrieval}}. \bibinfo{pages}{1289--1298}.
\newblock


\bibitem[Wang et~al\mbox{.}(2020)]%
        {DGCF}
\bibfield{author}{\bibinfo{person}{Xiang Wang}, \bibinfo{person}{Hongye Jin},
  \bibinfo{person}{An Zhang}, \bibinfo{person}{Xiangnan He},
  \bibinfo{person}{Tong Xu}, {and} \bibinfo{person}{Tat-Seng Chua}.}
  \bibinfo{year}{2020}\natexlab{}.
\newblock \showarticletitle{Disentangled graph collaborative filtering}. In
  \bibinfo{booktitle}{\emph{Proceedings of the 43rd international ACM SIGIR
  conference on research and development in information retrieval}}.
  \bibinfo{pages}{1001--1010}.
\newblock


\bibitem[Wang et~al\mbox{.}(2015)]%
        {LSB}
\bibfield{author}{\bibinfo{person}{Xiaohan Wang}, \bibinfo{person}{Pengfei
  Liu}, {and} \bibinfo{person}{Yuantao Gu}.} \bibinfo{year}{2015}\natexlab{}.
\newblock \showarticletitle{Local-set-based graph signal reconstruction}.
\newblock \bibinfo{journal}{\emph{IEEE transactions on signal processing}}
  \bibinfo{volume}{63}, \bibinfo{number}{9} (\bibinfo{year}{2015}),
  \bibinfo{pages}{2432--2444}.
\newblock


\bibitem[Wu et~al\mbox{.}(2019)]%
        {Simplify}
\bibfield{author}{\bibinfo{person}{Felix Wu}, \bibinfo{person}{Amauri Souza},
  \bibinfo{person}{Tianyi Zhang}, \bibinfo{person}{Christopher Fifty},
  \bibinfo{person}{Tao Yu}, {and} \bibinfo{person}{Kilian Weinberger}.}
  \bibinfo{year}{2019}\natexlab{}.
\newblock \showarticletitle{Simplifying graph convolutional networks}. In
  \bibinfo{booktitle}{\emph{International conference on machine learning}}.
  PMLR, \bibinfo{pages}{6861--6871}.
\newblock


\bibitem[Wu et~al\mbox{.}(2021)]%
        {SGL-ED}
\bibfield{author}{\bibinfo{person}{Jiancan Wu}, \bibinfo{person}{Xiang Wang},
  \bibinfo{person}{Fuli Feng}, \bibinfo{person}{Xiangnan He},
  \bibinfo{person}{Liang Chen}, \bibinfo{person}{Jianxun Lian}, {and}
  \bibinfo{person}{Xing Xie}.} \bibinfo{year}{2021}\natexlab{}.
\newblock \showarticletitle{Self-supervised graph learning for recommendation}.
  In \bibinfo{booktitle}{\emph{Proceedings of the 44th International ACM SIGIR
  Conference on Research and Development in Information Retrieval}}.
  \bibinfo{pages}{726--735}.
\newblock


\bibitem[Xia et~al\mbox{.}(2022)]%
        {xia2022fire}
\bibfield{author}{\bibinfo{person}{Jiafeng Xia}, \bibinfo{person}{Dongsheng
  Li}, \bibinfo{person}{Hansu Gu}, \bibinfo{person}{Jiahao Liu},
  \bibinfo{person}{Tun Lu}, {and} \bibinfo{person}{Ning Gu}.}
  \bibinfo{year}{2022}\natexlab{}.
\newblock \showarticletitle{FIRE: Fast incremental recommendation with graph
  signal processing}. In \bibinfo{booktitle}{\emph{Proceedings of the ACM Web
  Conference 2022}}. \bibinfo{pages}{2360--2369}.
\newblock


\bibitem[Yu and Qin(2020)]%
        {Yu20}
\bibfield{author}{\bibinfo{person}{Wenhui Yu} {and} \bibinfo{person}{Zheng
  Qin}.} \bibinfo{year}{2020}\natexlab{}.
\newblock \showarticletitle{Graph convolutional network for recommendation with
  low-pass collaborative filters}. In \bibinfo{booktitle}{\emph{International
  Conference on Machine Learning}}. PMLR, \bibinfo{pages}{10936--10945}.
\newblock


\bibitem[Zhang et~al\mbox{.}(2019)]%
        {FP}
\bibfield{author}{\bibinfo{person}{HuaXin Zhang}, \bibinfo{person}{Yu Liu},
  {and} \bibinfo{person}{KeYin Cao}.} \bibinfo{year}{2019}\natexlab{}.
\newblock \showarticletitle{Integrating Spectral-CF and FP-Growth for
  Recommendation}. In \bibinfo{booktitle}{\emph{Proceedings of the 2019 2nd
  International Conference on E-Business, Information Management and Computer
  Science}}. \bibinfo{pages}{1--7}.
\newblock


\bibitem[Zheng et~al\mbox{.}(2018)]%
        {SpectralCF}
\bibfield{author}{\bibinfo{person}{Lei Zheng}, \bibinfo{person}{Chun-Ta Lu},
  \bibinfo{person}{Fei Jiang}, \bibinfo{person}{Jiawei Zhang}, {and}
  \bibinfo{person}{Philip~S Yu}.} \bibinfo{year}{2018}\natexlab{}.
\newblock \showarticletitle{Spectral collaborative filtering}. In
  \bibinfo{booktitle}{\emph{Proceedings of the 12th ACM conference on
  recommender systems}}. \bibinfo{pages}{311--319}.
\newblock


\bibitem[Zhu et~al\mbox{.}(2022)]%
        {zhu2022bars}
\bibfield{author}{\bibinfo{person}{Jieming Zhu}, \bibinfo{person}{Quanyu Dai},
  \bibinfo{person}{Liangcai Su}, \bibinfo{person}{Rong Ma},
  \bibinfo{person}{Jinyang Liu}, \bibinfo{person}{Guohao Cai},
  \bibinfo{person}{Xi Xiao}, {and} \bibinfo{person}{Rui Zhang}.}
  \bibinfo{year}{2022}\natexlab{}.
\newblock \showarticletitle{Bars: Towards open benchmarking for recommender
  systems}. In \bibinfo{booktitle}{\emph{Proceedings of the 45th International
  ACM SIGIR Conference on Research and Development in Information Retrieval}}.
  \bibinfo{pages}{2912--2923}.
\newblock


\end{thebibliography}

\end{document}